\numberwithin{equation}{section}
\newtheorem{proposition}{Proposition}
\theoremstyle{definition}
\theoremstyle{remark}
\newtheorem{remark}{Remark}
\newcommand{\beq}{\begin{equation}}
\newcommand{\eeq}{\end{equation}}
\newcommand{\beqn}{\begin{equation*}}
\newcommand{\eeqn}{\end{equation*}}
\newcommand{\bea}{\begin{eqnarray}}
\newcommand{\eea}{\end{eqnarray}}
\newcommand{\bean}{\begin{eqnarray*}}
\newcommand{\eean}{\end{eqnarray*}}
\newcommand{\be}{\begin{enumerate}}
\newcommand{\ee}{\end{enumerate}}
\newcommand{\bi}{\begin{itemize}}
\newcommand{\ei}{\end{itemize}}
\newcommand{\bd}{\begin{description}}
\newcommand{\ed}{\end{description}}
\newcommand{\red}{\textcolor[rgb]{1.00,0.00,0.00}}
\begin{document}

\title{ An arbitrage driven price dynamics of Automated Market Makers in the presence of fees
}

\author[1]{Joseph Najnudel}
\address[Joseph Najnudel]{School of Mathematics, University of Bristol, BS8 1TW, Bristol, United Kingdom}
\email{joseph.najnudel@bristol.ac.uk}

\author[2]{Shen-Ning Tung}
\address[Shen-Ning Tung]{Department of Mathematics, National Tsing Hua University,  Hsinchu 300, Taiwan
}
\email{tung@math.nthu.edu.tw}

\author[3]{Kazutoshi Yamazaki}
\address[Kazutoshi Yamazaki]{School of Mathematics and Physics, the University of Queensland, Brisbane QLD 4072 Australia
}
\email{k.yamazaki@uq.edu.au}

\thanks{
}

\author[4]{Ju-Yi Yen}
\address[Ju-Yi Yen]{Department of Mathematics, University of Cincinnati, Cincinnati, OH 45221, USA}
\email{ju-yi.yen@uc.edu}

\email{} \curraddr{}

\keywords{}

\date\today

\begin{abstract}
We present a model for price dynamics in the Automated Market Makers (AMM) setting. Within this framework, we propose a reference market price following a geometric Brownian motion. The AMM price is constrained by upper and lower bounds, determined by constant multiplications of the reference price. Through the utilization of local times and excursion-theoretic approaches, we derive several analytical results, including its time-changed representation and limiting behavior.

\end{abstract}

\maketitle
\baselineskip 20pt

%-----------------------------------------------------------
%\baselineskip 22pt  %%
%-----------------------------------------------------------

\section{Introduction}

Automated Market Makers (AMMs)  \cite{2021arXiv210308842C, gobet:hal-04131680} are innovative algorithms that utilize blockchain technology to automate the process of pricing and order matching on decentralized exchanges. Their foundation on blockchain and the employment of smart contracts enable users to buy and sell crypto assets securely, peer-to-peer, without the dependency on intermediaries or custodians.

A critical distinction between AMMs and traditional centralized limit order book models is their mechanism for price determination. While an order book model derives prices through the intentions of individual buyers and sellers, an AMM determines prices based on the available liquidity in its pool. This pool, called the ``liquidity pool", consists of funds deposited by users, known as liquidity providers (LPs). These providers ``lock in" varying amounts of two or more tokens into a smart contract, making them available as liquidity for other users' trades. The barrier to entry for becoming an LP is low; one merely needs a self-custody wallet and a stock of compatible tokens.

A critical cost encountered by LPs in AMMs is adverse selection. This challenge primarily stems from arbitrageurs capitalizing on disparities between the lagging prices within AMMs and the real-time market prices often observed on centralized exchanges. Prominent studies, including \cite{2022arXiv220806046M, 2023arXiv230514604M}, delve into this concern by quantifying the losses experienced by LPs due to arbitrage. They employ a metric known as `loss-versus-rebalancing' (LVR) to measure the impact.

This paper presents a model of AMM pricing dynamics based on the following assumptions:
\begin{enumerate}
\item[a)] A reference market with infinite liquidity and no trading costs exists, where the reference market price $p=(p_t)_{t \geq 0}$ follows a geometric Brownian motion.
\item[b)]  The AMM applies a fixed trading fee tier of $(1-\gamma)\%$, proportional to the trading volume.  This fee can range from 1 basis point (bp) to 100 basis points (bps). Notably, Uniswap v3 offers options such as 1 bp, 5 bps, 30 bps, and 100 bps. Our focus is on the worst-case scenario, assuming that all trades within the pool are motivated by arbitrage.
\item[c)] Arbitrageurs actively monitor the market, initiating trades whenever they identify arbitrage opportunities.
\end{enumerate}
These assumptions integrate elements from two sources: \cite{2022arXiv220806046M}, which focuses on continuous arbitrage without fees, and \cite{2023arXiv230514604M}, which considers discrete arbitrage with fees. In our model, we address continuous arbitrage while incorporating the impact of fees.

From assumptions a)--c), we deduce the following relationships between the reference market price $p$ and  AMM price $\tilde{p}$: 
\begin{itemize}
\item Under Assumptions a) and b), an arbitrage opportunity exists when $p_t < \gamma \tilde{p}_t$ or $p_t > \frac{1}{\gamma} \tilde{p}_t$ (see \cite[\S 2.4]{10.1145/3419614.3423251}). 
\item According to Assumption b), the AMM price $\tilde{p}_t$ remains stable within the range of $\gamma p_t \leq \tilde{p}_t \leq \frac{1}{\gamma} p_t$, termed the no-arbitrage interval. This follows immediately from the first assertion.
\item Assumption c) implies that arbitrage actions occur only when $p_t = \gamma \tilde{p}_t$ or $p_t = \frac{1}{\gamma} \tilde{p}_t$, hence $\tilde{p}$ changes only at these times.
\end{itemize}

\begin{figure}[h]
\centering
\includegraphics[width=1.0\textwidth]{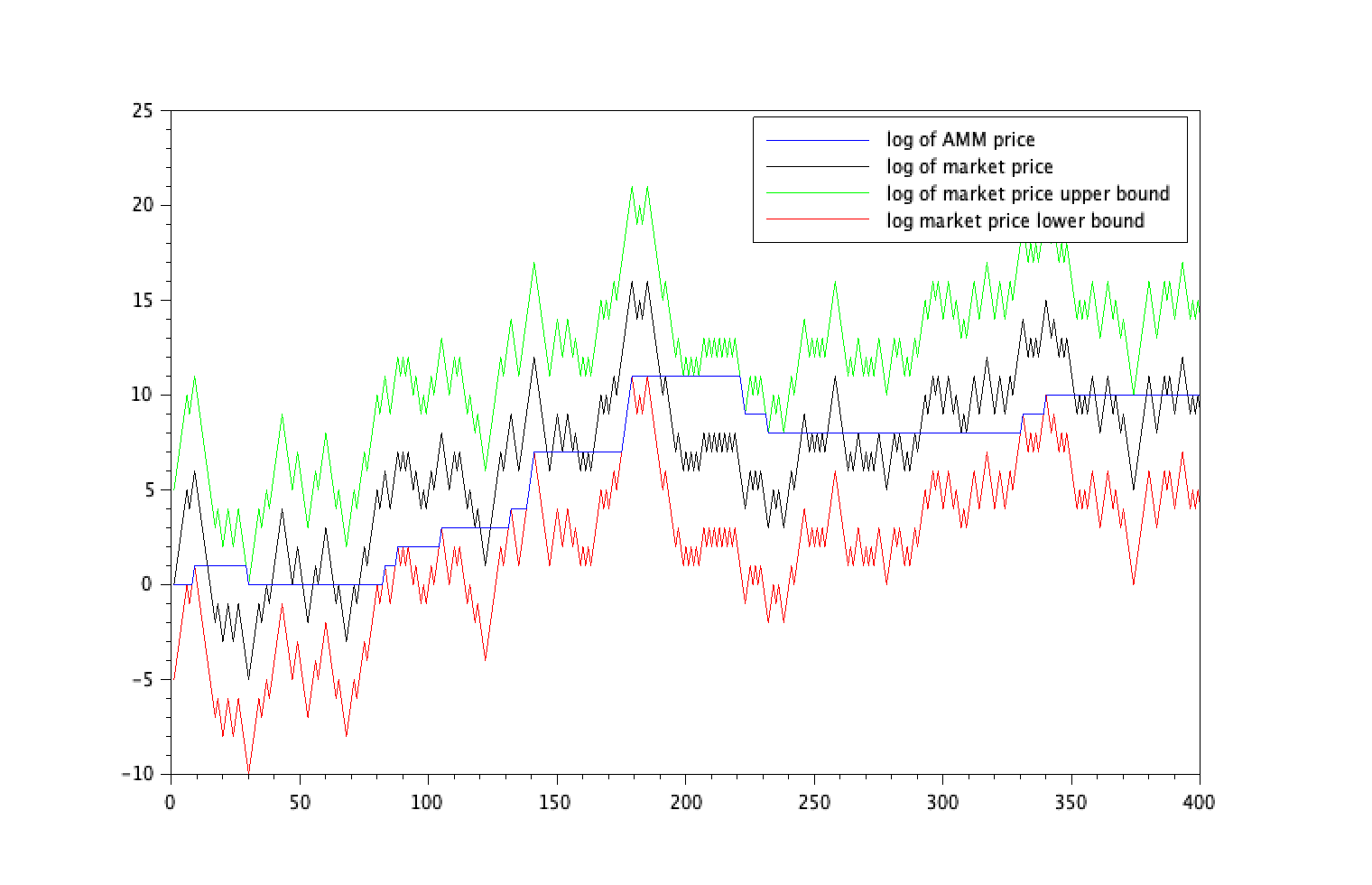}
\caption{Figure 1} 
\label{fig1}
\end{figure}

In Figure \ref{fig1}, a sample path of the logarithm of  $\tilde{p}$ is provided, along with the logarithm of  $p$ and the lower and upper bounds $\log (\gamma p)$ and $\log (\gamma^{-1} p)$, respectively. It is pushed upward (resp., downward) at times when it coincides with the lower (resp., upper) bound, and it stays constant at all other times.

Our work is inspired by the unpublished  paper by Tassy and White \cite{tassywhite}. While the context in the cited work specifically addresses Constant Product Market Makers, our model can be directly applied to all AMMs.
The key contribution of this paper is the detailed probabilistic description of the considered AMM price process, underpinned by the local time of the Brownian motion. The logarithm of $\tilde{p}$ can be seen as a concatenation of the running infimum and supremum processes of the market price, with appropriate shifting. By leveraging the classical results on the Skorokhod equation of a reflected Brownian motion, expressed in terms of Brownian local times, several analytical results concerning $\tilde{p}$ can be derived. Of particular interest is the well-posedness of such a process, its time-changed representation, and its asymptotic behavior. To the best of our knowledge, this paper represents the first study of this stochastic process, motivated by its applications in AMM.

The remainder of the paper is structured as follows: First, we present a precise mathematical construction of our AMM price process. Subsequently, we explore two distinct approaches to analyze the process under consideration in this study.

\section{AMM price process}

We assume the market price $p = (p_t)_{t \geq 0}$ (normalized in such a way that it is equal to $1$ at time $0$, and with the appropriate unit of time) is given by a geometric Brownian motion $p = \exp B$ for a standard Brownian motion  $B=(B_t)_{t \geq 0}$,  and the AMM price $\tilde{p}$ is constrained by the inequalities 
$$\gamma p  \leq \tilde{p} \leq \gamma^{-1}p$$
where the parameter $\gamma \in (0,1)$
 is related to the fees of the AMM. 
These inequalities are equivalent to 
\beq\label{ineq}
 B_t - c \leq  U_t   \leq  B_t + c 
 \eeq
where, for $t \geq 0$, $U_t$ is $\log \tilde{p}$ at time $t$, $B_t$ is $\log p$ at time $t$, and $c := \log (\gamma^{-1}) > 0$; see Figure \ref{fig1}.

%, and the AMM price $\tilde{p}$ is constrained by the inequalities 
%$$\gamma p  \leq \tilde{p} \leq \gamma^{-1}p$$
%where the parameter $\gamma \in (0,1)$
% is related to the fees of the AMM. 
%These inequalities are equivalent to 
%$$ B_t - c \leq  U_t   \leq  B_t + c $$
%where for $t \geq 0$, $U_t$ is $\log \tilde{p}$ at time $t$, $B_t$ is $\log p$ at time $t$, and $c := \log (\gamma^{-1}) > 0$. 

%\red{Let $B=(B_t)_{t \geq 0}$ be a standard Brownian motion.} 
The process $U=(U_t)_{t \geq 0}$ is chosen in such a way that $U_0 = B_0 = 0$,  and $U$ remains constant at any time where this is compatible with the inequalities \eqref{ineq}. More precisely, it is constant except when it is pushed up by the process $(B_t - c)_{t \geq 0}$ or pushed down by $(B_t +c)_{t \geq 0}$  in order to preserve the inequalities \eqref{ineq}. %$ B_t - c \leq  U_t   \leq  B_t + c $.

The precise description is described as follows, using a sequence of stopping times. If $B$ reaches $c$ before $-c$ (we call this event $A$), then we define, by induction: 
\begin{itemize}
\item $T_0$ as the infimum of $t \geq 0$ such that $B_t = c$. 
\item For all $k \geq 0$, $T_{2k+1}$ is the first time $t \geq T_{2k}$ such that
$B_t - \sup_{T_{2k} \leq s \leq t} B_s
= -2c $. 
\item For all $k \geq 0$, $T_{2k+2}$ is the first time $t \geq T_{2k+1}$ such that
$B_t - \inf_{T_{2k+1} \leq s \leq t} B_s
= 2c $. 
\end{itemize}
By the well-known property of the reflected Brownian motion, $T_{m+1}-T_m$ is finite, and so is $T_m$ almost surely (a.s.) for each $m \geq 0$.
Then, we define $(U_t)_{t \geq 0}$ as follows: 
\begin{itemize}
\item $U_t = 0$ for $0 \leq t \leq T_0$. 
\item For all $k \geq 0$ and $T_{2k} < t \leq T_{2k+1}$, $U_t = \sup_{T_{2k} \leq s \leq t} B_s - c$.
\item For all $k \geq 0$ and $T_{2k+1} < t \leq T_{2k+2}$, $U_t = \inf_{T_{2k+1} \leq s \leq t} B_s + c$.
\end{itemize}
Informally, $U$ is pushed up by $B - c$ in the intervals of time $[T_{2k}, T_{2k+1}]$ and pushed down by $B+c$ in the intervals of time $[T_{2k+1}, T_{2k+2}] $. 
Similarly, if 
$B$
%$(B_t)_{t \geq c}$ 
reaches $-c$ before $c$ (event $A^c$), then we define: 
\begin{itemize}
\item $T_0$ as the infimum of $t \geq 0$ such that $B_t = -c$. 
\item For all $k \geq 0$, $T_{2k+1}$ is the first time $t \geq T_{2k}$ such that
$B_t - \inf_{T_{2k} \leq s \leq t} B_s
= 2c $. 
\item For all $k \geq 0$, $T_{2k+2}$ is the first time $t \geq T_{2k+1}$ such that
$B_t - \sup_{T_{2k+1} \leq s \leq t} B_s
= -2c $. 
\end{itemize}
Then,
\begin{itemize}
\item $U_t = 0$ for $0 \leq t \leq T_0$. 
\item For all $k \geq 0$ and $T_{2k} < t \leq T_{2k+1}$, $U_t = \inf_{T_{2k} \leq s \leq t} B_s +c$.
\item For all $k \geq 0$ and $T_{2k+1} < t \leq T_{2k+2}$, $U_t = \sup_{T_{2k+1} \leq s \leq t} B_s - c$.
\end{itemize}

\begin{proposition} \label{Skorokhod}
The process $U:= (U_t)_{t \geq 0}$ is the only continuous process with finite variation satisfying the following properties 
\begin{enumerate}
   \item  $U_0 = 0$,
\item $B_t - c \leq U_t \leq B_t + c$ for all $t \geq 0$,
    \item $U$ is nondecreasing in all intervals where 
it remains strictly smaller than $B+c$,
\item $U$ is nonincreasing
in all intervals where 
it remains strictly larger than $B-c$.
\end{enumerate}
 Notice that the last two properties imply that $U$ is constant on intervals where $B-c < U< B + c$.
\end{proposition}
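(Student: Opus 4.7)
The plan divides into existence and uniqueness. For existence, I would verify the four listed properties for the explicitly constructed $U$, working interval by interval. On each piece $[T_{2k}, T_{2k+1}]$ in the event $A$, $U_t = \sup_{T_{2k} \le s \le t} B_s - c$ is by definition continuous, nondecreasing and of finite variation; the bound $U_t \geq B_t - c$ is immediate, and the bound $U_t \leq B_t + c$ follows from the definition of $T_{2k+1}$ as the first time at which $B_t - \sup_{T_{2k} \le s \le t} B_s$ reaches $-2c$, since on $[T_{2k}, T_{2k+1}]$ one then has $\sup - B \le 2c$. A symmetric argument treats the decreasing pieces, and continuity at each transition time $T_m$ follows because, for instance at $T_{2k+1}$, the identity $\sup - B = 2c$ forces $\sup - c = B_{T_{2k+1}} + c$, matching the value given by the next piece. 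The event $A^c$ is handled analogously, and since almost surely only finitely many $T_m$ lie in any bounded interval, global continuity and finite variation follow.

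For properties (3) and (4), the key observation is that on $[T_{2k}, T_{2k+1}]$ (case $A$), $U$ strictly increases only at those times when $B$ attains a new running supremum, i.e.\ exactly at those times when $U = B - c$; therefore on any subinterval where $U > B - c$ the process $U$ is locally constant, and in particular nonincreasing there. The same reasoning applied to each of the four types of pieces yields both (3) and (4) throughout $[0, \infty)$.

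For uniqueness I would give a direct pathwise argument that does not appeal to the explicit construction. Suppose $V$ is another continuous finite-variation process satisfying (1)--(4) and, for contradiction, that $U_s > V_s$ for some $s > 0$. Set
\[
\sigma := \sup\{t < s : U_t = V_t\},
\]
which is well defined and strictly less than $s$ since $U_0 = V_0 = 0$ and both processes are continuous. Then $U_t > V_t$ for all $t \in (\sigma, s]$. On this interval one has $U_t > V_t \ge B_t - c$, so property (4) applied to $U$ yields that $U$ is nonincreasing on $(\sigma, s]$ and hence $U_s \le U_\sigma$ by continuity. Likewise $V_t < U_t \le B_t + c$ on $(\sigma, s]$, so property (3) applied to $V$ gives $V_s \ge V_\sigma$. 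Combining, $U_s - V_s \le U_\sigma - V_\sigma = 0$, contradicting the assumption. The case $U_s < V_s$ is symmetric.

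The main obstacle is the bookkeeping at the transition times $T_m$: one must verify, by using the defining hitting-time conditions $B_{T_{2k+1}} - \sup_{T_{2k} \le s \le T_{2k+1}} B_s = -2c$ and $B_{T_{2k+2}} - \inf_{T_{2k+1} \le s \le T_{2k+2}} B_s = 2c$, both that $U$ stays within the corridor $[B-c, B+c]$ across the switches and that the pieces glue continuously. Once these endpoint identifications are in hand, uniqueness is immediate from the short pathwise contradiction above, without any need to invoke the Brownian character of $B$ or a general Skorokhod-problem existence theorem.
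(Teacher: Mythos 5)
Your existence argument follows the paper's own proof almost step for step: interval-by-interval verification of the corridor bounds via the defining hitting conditions, continuity at the switch times from the identities $\sup - B = 2c$ and $\inf - B = -2c$, and the observation that $U$ moves only when the running supremum (resp.\ infimum) is attained, i.e.\ only when $U = B-c$ (resp.\ $U = B+c$). The one point you gloss over, which the paper handles explicitly, is that an interval on which $U > B-c$ may straddle several of the pieces $[T_m, T_{m+1}]$; the local-constancy argument on each piece does patch together by continuity, but this deserves a sentence. Where you genuinely diverge is uniqueness. The paper takes the \emph{first} time $S$ at which two solutions disagree, shows that just after $S$ one of the two constraints is slack, and invokes Skorokhod's lemma (uniqueness for one-sided reflection) on a short interval $[S, S+u]$, with a case split according to the sign of $U_S - B_S$. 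You instead take the \emph{last} agreement time $\sigma$ before a point of disagreement $s$ and run a direct monotonicity comparison: on $(\sigma,s]$ the larger solution stays strictly above $B-c$ (hence is nonincreasing by (4)) while the smaller stays strictly below $B+c$ (hence is nondecreasing by (3)), forcing $U_s - V_s \le U_\sigma - V_\sigma = 0$. This is the classical argument for uniqueness of two-sided reflection; it is correct, purely pathwise, needs no case analysis, no finite-variation hypothesis on the competitor, and no appeal to Skorokhod's lemma. The paper's route, in exchange, sets up the reflection-map machinery ($A^j$, $Y$, $Z^j$) that connects the construction to the local-time representation used later, but as a proof of uniqueness your version is the more economical one.
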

\begin{proof}
The process $U$ is clearly continuous on $\mathbb{R} \backslash (T_m)_{m \geq 0}$.
A careful look at the definition above 
shows that $U$ has no discontinuity at times $(T_m)_{m \geq 0}$. 
Indeed, on $A$,
\begin{align} \label{U_cont}
\begin{split}
\lim_{t \uparrow T_{2k+1}}U_t &= \lim_{t \uparrow T_{2k+1}} \left(\sup_{T_{2k} \leq s \leq t} B_{s} - B_t \right) +  \lim_{t \uparrow T_{2k+1}}  B_t - c = 2c + B_{T_{2k+1}} - c =U_{T_{2k+1}}, \\
\lim_{t \uparrow T_{2k+2}}U_t &= \lim_{t \uparrow T_{2k+2}} \left(\inf_{T_{2k+1} \leq s \leq t} B_{s} - B_t \right) + \lim_{t \uparrow T_{2k+2}} B_t + c = - 2 c + B_{T_{2k+2}} + c =U_{T_{2k+2}},
\end{split}
\end{align}
where the second equalities hold because the supremum and infimum are attained at the hitting times $T_{2k+1}$ and $T_{2k+2}$ of reflected Brownian motion. The continuity holds in the same way on $A^c$ as well.

Since $U$ is monotone on each interval of the form $[T_m, T_{m+1}] $ for $m \geq 0$, it has finite variation. 

We now show properties (a)-(d) focusing on the case $A$.
%case that $B$ reaches $c$ before $-c$. 
The proof for the other case is similar.
We have \begin{itemize}
\item[(i)] $U$ is equal to zero on $[0, T_0]$. 
%\item[(ii)] For all $t \geq 0$, $B_t - c \leq U_t \leq B_t + c$. 
\item[(ii)] For all $k \geq 0$, $U$ is nondecreasing in $[T_{2k}, T_{2k+1}]$ and nonincreasing in $[T_{2k+1}, T_{2k+2}]$. 
\item[(iii)] For all $k \geq 0$, $U$ is equal to $B-c$ at time $T_{2k}$ and to $B+c$ at time $T_{2k+1}$.
\end{itemize}
Indeed, (i) holds by the definition of $U$ and (iii) is clear from the definitions of the stopping times $(T_m)_{m \geq 0}$ (see also \eqref{U_cont}). (ii) holds by the definition of $U$ in terms of a piecewise supremum and infimum process, which is piecewise monotone. Now (a) holds by (i) and (b) holds by (ii) and (iii).

It remains to show (c) and (d); we only provide proof for (d) as that for (c) is similar. Let $J := \{t \geq 0: U_t > B_t -c \}$.
By (iii), $J \subset [0, \infty) \backslash (T_{2k})_{k \geq 0}$, which is the union of $[0, T_0)$ and $(T_{2k}, T_{2k+2}), k \geq 0$. We know from (i) that $U$ is constant on $[0, T_0]$, and from (ii),  that $U$ is nonincreasing in $[T_{2k+1}, T_{2k+2})$. It is then enough to consider intervals 
\[
I \subset J \cap \left(\bigcup_{k \geq 0} (T_{2k}, T_{2k+1}] \right)
\]
a sub-interval of $(T_{2k}, T_{2k+1}]_{k \geq 0}$ at which $U_t > B_t-c$.   For $t \in I$, for some $k \geq 0$, $U_t = \sup_{T_{2k} \leq s \leq t} B_s - c$ and $U_t > B_t - c$, thus, the supremum of $B$ on $[T_{2k}, t]$ is not reached at $t$. This implies that $\sup_{T_{2k} \leq s \leq t} B_s$ remains constant for $t \in I$, and thus $U$  is constant, and then nonincreasing, in $I$, completing the proof of (d).

It remains to prove that the properties stated in Proposition \ref{Skorokhod} uniquely determine $U$. 

Let $U^1$ and $U^2$ be two distinct processes satisfying the conditions (a)-(d), and let 
\begin{align}
S:= \inf \{ t \geq 0: U^1_t \neq U^2_t\}. \label{S_def}
\end{align}
We assume $S < \infty$ and derive contradiction.

Necessarily, by (a), (c) and (d), $U^1 = U^2 = 0$ until the first time where $B$ reaches $\{-c,c\}$ which is strictly positive, and thus $S > 0$. By the definition of $S$ as in \eqref{S_def}, $U^1_t = U^2_t$ for all $t \in [0, S)$, and by the continuity of $U^1$ and $U^2$, we also have $U^1_S = U^2_S =: U_S$. 

Suppose 
\begin{align}
 U^1_S - B_S = U^2_S - B_S = U_S - B_S \geq 0. \label{assump_contradiction}  
\end{align}
By this, continuity of $B$ and $U$ and (b), we have
$B_t - c < U^j_t  \leq B_t + c$ for $j = 1,2$ on the interval $[S, S+u]$ for sufficiently small $u > 0$. 
For $j \in \{1,2\}$, let us define the following processes 
on $[S, S+u]$: 
\begin{align*}
A_t^j &:= U_S - U^j_t, \\
Y_t &:= B_t + c - U_S, \\
Z_t^j &:= Y_t + A_t^j = B_t + c - U^j_t.
\end{align*}
By (b), $Y_S \geq 0$ and $Z_t^j \geq 0$, and $A_t^j$ is continuous and vanishes at the starting time $S$.  

Moreover, the variation of $A^j$ is carried by the set of times where $Z^j = 0$. Indeed, $A^j$ changes only when $U^j$ changes, 
%then only when 
or equivalently, only when
$U^j$ is equal to $B-c$ or $B+c$,  and then only when $U^j = B+c$ since we know that $U^j > B-c$ on the interval $[S, S+u]$. From this last property and (d),
%one of the assumptions of Proposition \ref{Skorokhod}, 
$U^j$ is nonincreasing in $[S, S+u]$, which implies that $A^j$ is nondecreasing in this interval.  
By Skorokhod's lemma (see Lemma 2.1, Chapter VI of \cite{RY}; Lemma 3.1.2, Chapter 3 of \cite{MR3134857}), $A^j$ is uniquely determined by $Y$ on the interval $[S, S+u]$, and thus $A^1 = A^2$, which implies that $U^1 = U^2 $ on the interval $[S, S+u]$. This contradicts the fact that $S$ is the infimum of times where $U^1 \neq U^2$ as in \eqref{S_def}. This contradiction shows that $S = \infty$, i.e.\
one cannot have two distinct processes satisfying the assumptions of Proposition \ref{Skorokhod}. Similar argument holds for the complement case of \eqref{assump_contradiction}.

\end{proof}

Now, let $W= (W_t)_{t \geq 0}$ be another standard Brownian motion, and let $F: \mathbb{R} \to [-c,c]$ be the triangle wave function with period $4c$ such that
\[
F(x) = \left\{ \begin{array}{ll} \cdots & \cdots \\ -2c - x & \textrm{for } x \in [-3c, -c] 
\\
x & \textrm{for }  x \in [-c, c] \\ 2c - x & \textrm{for }  x \in [c, 3c]\\ \cdots & \cdots
\end{array} \right.
\]

%$$F(x) = -2c - x$$
%for $x \in [-2c, -c]$, 
%$$F(x) = x$$
%for $x \in [-c, c]$, and 
%$$F(x) = 2c - x$$
%for $x \in [c, 2c]$. 
\noindent The derivative of $F$ is $1$ in the interval $((4k-1)c, (4k+1)c)$ and  $-1$ in the interval $((4k+1)c, 4k+3)c)$, 
for all $k \in \mathbb{Z}$. 
The second derivative of $F$ in the sense of the distribution is 
$$2 \sum_{k \in \mathbb{Z}} (\delta_{ (4k-1)c } -
\delta_{ (4k+1)c })
$$
where $\delta_x$ denotes the Dirac measure at $x$. 
By It\^o-Tanaka formula, 
$$F(W_t) = \beta_t - V_t, \quad t \geq 0,$$
where $(\beta_t)_{t \geq 0}$ is a standard Brownian motion, 
$$V_t := \sum_{k \in \mathbb{Z}} 
(L^{(4k+1)c}_t - L^{(4k-1)c}_t), \quad t \geq 0,
$$
with $L^x_t$ denoting the local time of $W$ at time $t$ and level $x \in \mathbb{R}$. 

\begin{remark} 
The processes $(\beta_t)_{t \geq 0}$ and $(V_t)_{t \geq 0}$ are continuous and enjoy the following properties: 
\begin{enumerate}
\item $V_0 = 0$. 
\item $(V_t)_{t \geq 0}$ has paths of  finite variation, since it is the difference of two nondecreasing processes, given by sums of local times at different levels. 
\item For all $t \geq 0$, $  V_t - \beta_t = - F(W_t) \in [-c, c]$, so $\beta_t - c \leq V_t \leq \beta_t + c$. 
\item In an interval  $K
:=\{t: V_t < \beta_t + c\} = [0,\infty) \backslash \{t: F(W_t)= -c\}
$,  
$W$ does not reach 
levels congruent to $-c$ modulo $4c$ (these levels corresponding to values of $x \in \mathbb{R}$ such that $F(x) = -c$). Local times at levels $(4k-1)c$ for $k \in \mathbb{Z}$ are constant on $K$, which implies, from the definition of $V$, that $V$ is nondecreasing on $K$. 
\item In an interval $K'
:= \{t: V_t > \beta_t - c\} = [0,\infty) \backslash \{t: F(W_t)=c\}
$,
$W$ does not reach 
levels congruent to $c$ modulo $4c$. Local times at these levels are constant on $K'$, which implies that $V$ is nonincreasing on $K'$. 
\end{enumerate}
\end{remark}
From Proposition \ref{Skorokhod}, and the corresponding uniqueness property, 
the process $(V_t)_{t \geq 0}$ can be recovered from $(\beta_t)_{t \geq 0}$ in the same way as $(U_t)_{t \geq 0}$ has been constructed from $(B_t)_{t \geq 0}$. Since $(B_t)_{t \geq 0}$ and $(\beta_t)_{t \geq 0}$ are both Brownian motions, $(B_t, U_t)_{t \geq 0}$ has the same joint distribution as $(\beta_t, V_t)_{t \geq 0}$. 

We provide two descriptions of the processes discussed here using different time changes, the inverse local times and the hitting times.
\subsection*{Time change with inverse local times}
We define the following linear combination of local times: 
$$\left(\mathcal{L}_t := \sum_{k \in \mathbb{Z}} 
L^{(2k+1)c}_t \right)_{t \geq 0}, $$
and its inverse process $(\sigma_{\ell})_{\ell \geq 0}$ given by 
$$\sigma_{\ell} := \inf\{ t \geq 0, \mathcal{L}_t > \ell\}. $$
Note that $\mathcal{L}_t \nearrow \infty$ as $t \to \infty$, thus $\sigma_{\ell} < \infty$ for all $\ell \geq 0$ a.s.

Since $\mathcal{L}$ is an additive functional of the Brownian motion $W$ (see \cite{RY}, Chapter X), by the strong Markov property, the time-changed Brownian motion $(W_{\sigma_{\ell}})_{\ell 
\geq 0} $ is a continuous-time Markov chain. Moreover, since the variation of $(\mathcal{L}_t)_{t \geq 0}$ is supported on the set of times where $W$ is in $E := \{ (2k+1) c, k \in \mathbb{Z}\}$, we have that the Markov chain $(W_{\sigma_{\ell}})_{\ell 
\geq 0} $ takes its value only in $E$. This process can only jump between consecutive values in $E$. Indeed, if 
$(W_{\sigma_{\ell}})_{\ell 
\geq 0} $ jumps between non-consecutive values $x$ and $y$ in $E$ at some $\ell \geq 0$, then the path 
$(W_t)_{\sigma_{\ell-} \leq t \leq \sigma_{\ell}} $
starts at $x$, ends at $y$, without accumulating local time at levels in $E$ strictly between $x$ and $y$: this event occurs with probability zero. Similarly, the starting point
$W_{\sigma_{0}} $ can only be $c$ or $-c$. Thus, the following results are immediate.

\begin{proposition} 
The process $(W_{\sigma_{\ell}})_{\ell 
\geq 0} $ satisfies the following properties: 
\begin{enumerate}
\item It starts at $c$ with probability $1/2$, and at $-c$ with probability $1/2$. 
\item It jumps by $2c$ or $-2c$, the rate of all possible jumps being the same, again by symmetry. 
\end{enumerate}
\end{proposition}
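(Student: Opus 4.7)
The plan is to sharpen the Markov-chain argument sketched in the paragraph preceding the statement. Because $\mathcal{L}$ is a continuous additive functional of $W$, each $\sigma_\ell$ is a stopping time, and the strong Markov property makes $(W_{\sigma_\ell})_{\ell \geq 0}$ a time-homogeneous Markov process on $E$. The proof then splits into (i) identifying the starting distribution, (ii) showing that the only possible jumps are by $\pm 2c$, and (iii) equating the two jump rates from each state.

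For (i), I would first observe that $\sigma_0 = \inf\{t \geq 0 : \mathcal{L}_t > 0\} = \inf\{t \geq 0 : W_t \in E\}$. The left equality is definitional, and the right one uses the classical fact (Revuz--Yor, Chapter VI) that the support of $dL^x$ is the closure of $\{s : W_s = x\}$, so $\mathcal{L}$ begins to grow exactly when $W$ first reaches $E$. Since $\pm c$ are the two points of $E$ closest to $W_0 = 0$, path continuity forces $W_{\sigma_0} \in \{-c, c\}$, and the reflection $W \mapsto -W$ exchanges the two outcomes, giving probability $1/2$ each.

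For (ii), suppose by contradiction that at some $\ell > 0$ we have $W_{\sigma_{\ell-}} = x \in E$ and $W_{\sigma_\ell} = y \in E$ with $|y - x| > 2c$, and choose an intermediate $z \in E$ strictly between $x$ and $y$. By continuity of $W$ there exists $t \in (\sigma_{\ell-}, \sigma_\ell)$ with $W_t = z$, and by the support property of $dL^z$ the local time $L^z$ strictly increases in every open neighborhood of $t$; hence so does $\mathcal{L}$, contradicting the fact that $\mathcal{L}$ is constant on $(\sigma_{\ell-}, \sigma_\ell)$ by definition of the inverse $\sigma$. For (iii), I would combine two invariances: the set $E$ is a coset of the lattice $2c\mathbb{Z}$, so $E - x = 2c\mathbb{Z}$ for every $x \in E$, and translation invariance of Brownian motion then shows that the rates $q(x, x + 2c)$ and $q(x, x - 2c)$ are independent of $x$. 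The reflection $W \mapsto -W$ sends $E$ to itself and maps an up-jump from $x$ to a down-jump from $-x$; combined with the translation invariance just noted, this forces $q(x, x + 2c) = q(x, x - 2c)$.

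The main technical hurdle is the local-time step in (ii): everything else is routine strong-Markov and symmetry bookkeeping, but to exclude non-neighbor jumps one has to invoke the characterization of the support of $dL^z$ so that any visit of $W$ to $z$ produces strictly positive accumulation of $\mathcal{L}$ on any surrounding open interval.
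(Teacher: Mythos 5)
Your proposal is correct and takes essentially the same route as the paper: the paper likewise obtains the proposition from the additive-functional/strong-Markov structure of $\mathcal{L}$, the support of the local-time measures (to confine the chain to $E$, exclude non-neighbour jumps, and identify $W_{\sigma_0}\in\{-c,c\}$), and symmetry of Brownian motion. You simply make the reflection/translation symmetry and the local-time support step more explicit than the paper does, which is fine.
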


Now, let us compute the rate of jumps. By translation invariance and the strong Markov property, the probability of having no jump during an interval of length $\ell$ is equal, for a Brownian motion $B$, its local time $L^B$  at level zero and its inverse local time $\tau_\ell := \inf \{ t \geq 0: L^B > \ell \}$, $\ell \geq 0$ : 
\bean
&&\mathbb{P}[\max\{|B_s|, s\le \tau_\ell\}<2c]\\
&&=\left(\mathbb{P}[\max\{B_s, s\le \tau_\ell\}<2c]\right)^2 \ \text{since the negative and positive excursions of $B$ are independent}\\
&&= \left(\mathbb{P}[L^B_{\inf\{t \geq 0, B_t = 2c\}} >\ell]\right)^2=\left(\mathbb{P}[\text{BESQ}_2 (2c)>\ell]\right)^2 \ \text{by Ray-Knight theorem}
\eean
where $\text{BESQ}_2$ denotes a squared Bessel process of dimension $2$. 
Hence, for a standard exponential variable ${\bf e}$, 
\bean
&&\mathbb{P}[\max\{|B_s|, s\le \tau_\ell\}<2c] \\
&&=\left(\mathbb{P}[2\times {\bf e}\times 2c>\ell]\right)^2 \\
&&=\left(e^{-\ell/4c}\right)^2 =e^{-\ell/2c}.
\eean
Hence, the sum of the rates of jump by $2c$ and $-2c$
is equal to $1/2c$, and then each of the two possible jumps has a rate 
$1/4c$ by symmetry. 

We have 
$$V_{\sigma_{\ell}} = \sum_{k \in \mathbb{Z}}
(L_{\sigma_{\ell}}^{(4k+1)c} -  L_{\sigma_{\ell}}^{(4k-1)c})
$$
and 
$$ \mathcal{L}_{\sigma_{\ell}} = \sum_{k \in \mathbb{Z}}
(L_{\sigma_{\ell}}^{(4k+1)c} + L_{\sigma_{\ell}}^{(4k-1)c}).
$$
In an interval of values of $\ell$ where $W_{\sigma_{\ell}}$ remains constant and equal to $ (4k-1)c$ for a given $k \in \mathbb{Z}$, the variations of $V_{\sigma_{\ell}}$ are opposite to the variations of $\mathcal{L}_{\sigma_{\ell}} = \ell$, since the only local time which varies in the corresponding sums is the local time 
at level $(4k-1)c$. Similarly, in an interval of values of $\ell$ where $W_{\sigma_{\ell}}$ remains constant and equal to $ (4k+1)c$ for a given $k \in \mathbb{Z}$, the variations of $V_{\sigma_{\ell}}$ are equal to the variations of $\ell$. 
Hence, $(V_{\sigma_{\ell}})_{ \ell \geq 0}$ is the difference of time spent by the Markov process $(W_{\sigma_{\ell}})_{\ell \geq 0}$ at levels congruent to $c$ modulo $4c$, minus the time spent at levels congruent to $-c$ modulo $4c$. 
  If we reduce $(W_{\sigma_{\ell}})_{\ell \geq 0}$ modulo $4c$ and divide it by $c$, we get a Markov chain, $(Y_\ell)_{\ell \geq 0}$, on the two-state space 
$\{-1, 1\}$, with its starting point uniform in this space, 
and rate of jumps equal to $1/2c$. Then, $(V_{\sigma_{\ell}})_{ \ell \geq 0}$ is the difference of time spent at state $1$ by this Markov process, minus the time spent at state $-1$, i.e. the integral of the Markov chain: $V_{\sigma_{\ell}} = \int_0^\ell (1_{\{Y_s = 1\}} - 1_{\{Y_s = -1\}}) ds$.

\subsection*{Time change with hitting times}
Let us define the sequence of stopping times 
$(H_k)_{k \geq 0}$, as follows: 
\bi
\item If $W$ hits $c$ before $-c$ (we call this event $\mathcal{A}$), $H_0=H_1$ is the  first hitting time of $c$, and for $k\ge1$,
$$H_{k+1}=\inf\{t>H_k, |W_t-W_{H_k}|\ge 2c\}.$$

\item If $W$ hits $-c$ before $c$ (or $\mathcal{A}^c$), $H_0$ is the first hitting time of $-c$, and for $k\ge0$,
$$H_{k+1}=\inf\{t>H_k, |W_t-W_{H_k}|\ge 2c\}.$$

\ei

Between $H_0$ and $H_1$, $V$ varies as the opposite of the local time of $W$ at $-c$; for $k \geq 1$, between $H_{2k-1}$ and $H_{2k}$, $V$ varies as the local time of $W$ at $W_{H_{2k- 1}}$; for $k \geq 1$, 
between $H_{2k}$ and $H_{2k+1}$, it varies as the opposite of the local time of $W$ at $W_{H_{2k}}$.
We deduce, for all $m \geq 1$, 
$$V_{H_{m}} = \sum_{j=0}^{m-1} (-1)^{j-1} \left(L^{W_{H_j}}_{H_{j+1}} - L^{W_{H_j}}_{H_{j}} \right), 
$$
where we recall $L^x_t$ denoting the local time of $W$ at time $t$ and level $x$. 
Notice that the term $j = 0$ vanishes on $\mathcal{A}$, since $H_0 = H_1$ in this case.

All these increments of local time, except the first one if $H_0=H_1$ 
in the event $\mathcal{A}$, are independent and have the distribution of the local time $L^B_{T^*_{2c}}$ of $B$ at level zero, where $B$ is a Brownian motion and $$T^*_{2c}=\inf\{t>0, |B_t|\ge2c\}.$$

Now, we show that using this approach,
one can compute this distribution via Ray-Knight Theorem:
for $\ell \geq 0$, 
$L^B_{T^*_{2c}}>\ell$ if and only if no excursion of $B$ before the inverse local time $\tau_\ell$ reaches $2c$ or $-2c$, which implies 
\bean
&&\mathbb{P}(L^B_{T^*_{2c}}>\ell)=\mathbb{P}(\text{no excursion of $B$ reaches } \pm 2c \text{ before } \tau_\ell)\\
&&=\left(\mathbb{P}(\text{no excursion of $B$ reaches }  2c \text{ before } \tau_\ell)\right)^2 \ \text{by independence of excursions}\\
&& =\left(\mathbb{P}(\inf\{ t \geq 0, B_t = 2c\} >\tau_\ell)\right)^2 \\
&&=\left(\mathbb{P}(L^B_{\inf\{ t \geq 0, B_t = 2c\}}>\ell)\right)^2\\
&&=\left(\mathbb{P}(4c {\bf e}>\ell)\right)^2 \text{ by Ray-Knight Theorem }\\
&&=e^{-\ell/2c}.
\eean
Hence, $L^B_{T^*_{2c}}$ has the same distribution as 
$2c {\bf e}$. We deduce the equality in distribution, for $m \geq 2$,  
$$V_{H_m} = 2 c \left( -  X {\bf e}_0
+ \sum_{j=1}^{m-1} (-1)^{j-1}{\bf e}_j \right), $$
where $({\bf e}_j)_{j \geq 0}$ is a sequence of i.i.d. 
standard exponential variables, and $X$ is an independent, Bernoulli random variable with parameter $1/2$: $X = 0$ if and only if $W$ hits $c$ before $-c$ (i.e.\ $\mathcal{A}$). 

For $k \geq 1$, $V_{H_{2k+1}}$ is a sum of 
$- 2c X {\bf e}_0$ and of $k$ i.i.d. variables with 
the same distribution as $2c ({\bf e}_1 - {\bf e}_2)$.
These $k$ variables are centered, with variance equal 
to $8c^2$, since ${\bf e}_1$ and ${\bf e}_2$ are independent with variance $1$. 
We get the following central limit theorem: 
\begin{proposition} \label{prop_CLT}
When $k$ tends to infinity, 
$$\frac{V_{H_{2k+1}}}{c\sqrt{8k}} \longrightarrow \mathcal{N}(0,1)$$
in distribution. 
\end{proposition}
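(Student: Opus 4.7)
The plan is to apply the classical central limit theorem to the explicit distributional identity
$$V_{H_{2k+1}} \stackrel{(d)}{=} 2c\left(-X\mathbf{e}_0 + \sum_{j=1}^{2k}(-1)^{j-1}\mathbf{e}_j\right)$$
already established just above the statement, and then absorb the boundary term $-2cX\mathbf{e}_0$ via Slutsky's theorem.

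First, I would group the summands in pairs, writing
$$\sum_{j=1}^{2k}(-1)^{j-1}\mathbf{e}_j = \sum_{i=1}^{k}(\mathbf{e}_{2i-1}-\mathbf{e}_{2i}),$$
so that the right-hand side becomes a sum of $k$ i.i.d.\ random variables $\xi_i := \mathbf{e}_{2i-1}-\mathbf{e}_{2i}$. Each $\xi_i$ is centered (since $\mathbf{e}_{2i-1}$ and $\mathbf{e}_{2i}$ share the same mean), and has variance $\mathrm{Var}(\xi_i) = \mathrm{Var}(\mathbf{e}_{2i-1})+\mathrm{Var}(\mathbf{e}_{2i}) = 2$. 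The classical CLT for i.i.d.\ square-integrable variables therefore gives
$$\frac{1}{\sqrt{2k}}\sum_{i=1}^{k}\xi_i \longrightarrow \mathcal{N}(0,1)$$
in distribution as $k \to \infty$, and hence
$$\frac{2c\sum_{i=1}^{k}\xi_i}{c\sqrt{8k}} = \frac{2c\sum_{i=1}^{k}\xi_i}{2c\sqrt{2k}} \longrightarrow \mathcal{N}(0,1).$$

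Next, I would handle the residual term $-2cX\mathbf{e}_0$. Since $X \in \{0,1\}$ and $\mathbf{e}_0$ has a fixed (standard exponential) distribution independent of $k$, the random variable $-2cX\mathbf{e}_0$ is tight, and therefore
$$\frac{-2cX\mathbf{e}_0}{c\sqrt{8k}} \longrightarrow 0$$
in probability. Combining the two convergences via Slutsky's theorem yields
$$\frac{V_{H_{2k+1}}}{c\sqrt{8k}} \longrightarrow \mathcal{N}(0,1)$$
in distribution, which is the claim.

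There is no real obstacle here: the heavy lifting — identifying the law of $V_{H_{2k+1}}$ as an affine function of a sum of i.i.d.\ exponentials via the Ray–Knight computation — has already been done in the preceding paragraphs. The only minor care needed is to verify that the normalizing constant $c\sqrt{8k}$ matches the variance $8c^2k$ of $2c\sum_{i=1}^k\xi_i$, and to ensure the boundary contribution from $\mathbf{e}_0$ and the event $\mathcal{A}$ is asymptotically negligible, both of which are immediate.
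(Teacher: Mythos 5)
Your proof is correct and follows essentially the same route as the paper: pairing the $2k$ alternating exponential increments into $k$ i.i.d.\ centered variables of variance $8c^2$, applying the classical CLT, and discarding the tight boundary term $-2cX\mathbf{e}_0$ via Slutsky. The paper states this more tersely but the argument is identical.
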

 
The increment $H_{2k+1} - H_1$ is itself the sum of $2k$ i.i.d. random variables with the same distribution as $T^*_{2c}$. 
By the law of large numbers, we have a.s.
%almost surely
$$\frac{ H_{2k+1}}{2k} 
\underset{k \rightarrow \infty}{\longrightarrow}
\mathbb{E}[T^*_{2c}]. 
$$
Stopping the martingale $(B^2_t - t)_{t \geq 0}$ at time 
$T^*_{2c}$, we deduce for all $t \geq 0$, by optional sampling, that
$$\mathbb{E} [ B^2_{\min(t,T^*_{2c})} ] 
= \mathbb{E} [ \min (t, T^*_{2c})]. $$
For $t \rightarrow \infty$, $B^2_{\min(t,T^*_{2c})} \leq 4c^2$
almost surely converges to $4c^2$, and 
$\min(t, T^*_{2c})$ converges to $T^*_{2c}$ from below. Applying dominated convergence to the left-hand side and monotone convergence to the right-hand side, we deduce 
$$ 4c^2 = \mathbb{E} [ T^*_{2c}],$$
and then almost surely 
\begin{align}
\frac{ H_{2k+1}}{2k} 
\underset{k \rightarrow \infty}{\longrightarrow}
4c^2. \label{eq_LLN}
\end{align}
 We get 
$$\frac{V_{H_{2k+1}}}{ \sqrt{ H_{2k+1}}}
= \frac{V_{H_{2k+1}}}{c\sqrt{8k}} 
 \cdot  2c \left( \frac{ H_{2k+1}}{2k}   \right)^{-1/2} $$
where the first factor converges to a standard normal distribution by Proposition \ref{prop_CLT} and the second factor converges almost surely to $1$ by \eqref{eq_LLN}. By Slutsky's thorem, we get the following: 
\begin{proposition}
When $k$ tends to infinity, 
$$\frac{V_{H_{2k+1}}}{\sqrt{H_{2k+1}}} \longrightarrow \mathcal{N}(0,1)$$
in distribution. 
\end{proposition}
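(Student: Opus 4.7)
The plan is to decompose the ratio $V_{H_{2k+1}}/\sqrt{H_{2k+1}}$ into two factors whose limits I already understand, and then apply Slutsky's theorem. Specifically, I would write
\[
\frac{V_{H_{2k+1}}}{\sqrt{H_{2k+1}}} = \frac{V_{H_{2k+1}}}{c\sqrt{8k}} \cdot 2c \left(\frac{H_{2k+1}}{2k}\right)^{-1/2},
\]
so that the first factor converges in distribution to $\mathcal{N}(0,1)$ by Proposition \ref{prop_CLT}, and everything reduces to showing that the second factor converges almost surely to a deterministic constant equal to $1$. This amounts to proving $H_{2k+1}/(2k) \to 4c^2$ a.s.

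For the almost sure limit of $H_{2k+1}/(2k)$, I would first observe that by the strong Markov property applied at each $H_k$ and the translation invariance of Brownian motion, the increments $(H_{k+1} - H_k)_{k \geq 1}$ form an i.i.d.\ sequence, each distributed as the exit time $T^*_{2c}$ of a standard Brownian motion from $(-2c, 2c)$. Writing $H_{2k+1} = H_1 + \sum_{j=1}^{2k}(H_{j+1}-H_j)$ and noting that $H_1$ is a.s.\ finite, the strong law of large numbers then gives $H_{2k+1}/(2k) \to \mathbb{E}[T^*_{2c}]$ a.s.

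To compute $\mathbb{E}[T^*_{2c}]$, I would apply optional sampling to the martingale $B_t^2 - t$ at the bounded stopping times $\min(t, T^*_{2c})$, obtaining $\mathbb{E}[B_{\min(t, T^*_{2c})}^2] = \mathbb{E}[\min(t, T^*_{2c})]$. Letting $t \to \infty$, dominated convergence (using the uniform bound $B^2 \leq 4c^2$ up to $T^*_{2c}$) on the left and monotone convergence on the right yield $\mathbb{E}[T^*_{2c}] = 4c^2$, giving \eqref{eq_LLN}.

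The main subtlety — really the only nontrivial check — is the independence and identical-distribution of the increments $H_{k+1}-H_k$ for $k \geq 1$: the definition $H_{k+1} = \inf\{t > H_k : |W_t - W_{H_k}| \geq 2c\}$ depends only on the post-$H_k$ Brownian motion $(W_{H_k+s}-W_{H_k})_{s \geq 0}$, which by the strong Markov property is an independent standard Brownian motion, so $H_{k+1}-H_k$ is the exit time of this new Brownian motion from $(-2c,2c)$, independent of $\mathcal{F}_{H_k}$. Once this is in place, combining the distributional convergence of the first factor with the almost sure convergence $2c(H_{2k+1}/(2k))^{-1/2} \to 1$ via Slutsky's theorem closes the proof.
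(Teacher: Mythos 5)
Your proposal is correct and follows essentially the same route as the paper: the identical factorization $\frac{V_{H_{2k+1}}}{c\sqrt{8k}} \cdot 2c \left(\frac{H_{2k+1}}{2k}\right)^{-1/2}$, the law of large numbers for $H_{2k+1}/(2k)$ with $\mathbb{E}[T^*_{2c}] = 4c^2$ computed by optional sampling of $B_t^2 - t$, and Slutsky's theorem. Your explicit justification of the i.i.d.\ structure of the increments $H_{k+1}-H_k$ via the strong Markov property is a detail the paper leaves implicit, but the argument is the same.
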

This result can be compared to the central limit theorem
$$\frac{V_{t}}{\sqrt{t}} \underset{t \rightarrow \infty}{\longrightarrow} \mathcal{N}(0,1),$$
 which is a direct consequence of the inequality 
$$\beta_t - c \leq V_t \leq \beta_t + c,$$
where $(\beta_t)_{t \geq 0}$ is a Brownian motion.

\begin{comment}
\red{Some potential extensions:
\begin{itemize}
    \item two-sided exit identities $\mathbb{P} (\tau_a < \tau_b)$ where $\tau_a := \inf \{ t > 0: V_t < a\}$ and $\tau_b := \inf \{ t > 0: V_t > b\}$.
    \item $q$-resolvents $\int_0^\infty e^{-qt} 1_{\{ V_t \in A \}} dt$
    \item Decomposing $V_t = \beta_t - X_t + Y_t$ where $X$ is reflection from above by $\beta + c$ and $Y$ is reflection from below by $\beta - c$. Some analysis about $X$ and $Y$, say $\int_0^\infty e^{-qt} d X_t$ and $\int_0^\infty e^{-qt} d Y_t$, or up to some hitting times.
\end{itemize}
}

\end{comment}

\bibliographystyle{abbrv}
\bibliography{Reference}

\end{document}